\DeclareMathOperator{\E}{\mathbb{E}}
\DeclareMathOperator{\Prob}{\mathbb{P}}
\DeclareMathOperator{\diag}{\text{diag}}
\newcommand{\abs}[1]{\left\lvert#1\right\rvert}
\newcommand{\norm}[1]{\left\lVert#1\right\rVert}
\newcommand{\bE}{\bm E}
\newcommand{\bI}{\bm I}
\newcommand{\bS}{\bm S}
\newcommand{\bs}{\bm s}
\newcommand{\bU}{\bm U}
\newcommand{\bV}{\bm V}
\newcommand{\bv}{\bm v}
\newcommand{\bX}{\bm X}
\newcommand{\bx}{\bm x}
\newcommand{\bY}{\bm Y}
\newcommand{\by}{\bm y}
\newcommand{\bbeta}{\bm \beta}
\newcommand{\bepsilon}{\bm \epsilon}
\newcommand{\bLambda}{\bm \varLambda}
\newcommand{\bSigma}{\bm \varSigma}
\theoremstyle{definition}
\newtheorem{assumption}{Assumption}
\newtheorem{remark}{Remark}
\theoremstyle{plain}
\newtheorem{theorem}{Theorem}
\newtheorem{lemma}{Lemma}
\newtheorem{corollary}{Corollary}
\title{A restricted eigenvalue condition for unit-root non-stationary data}
\author{Etienne Wijler}
\date{Vrije Universiteit Amsterdam\\
Department of Econometrics and Data Science\\
\today}
\begin{document}
\maketitle

\begin{abstract}
    In this paper, we develop a restricted eigenvalue condition for unit-root non-stationary data and derive its validity under the assumption of independent Gaussian innovations that may be contemporaneously correlated. The method of proof relies on matrix concentration inequalities and offers sufficient flexibility to enable extensions of our results to alternative time series settings. As an application of this result, we show the consistency of the lasso estimator on ultra high-dimensional cointegrated data in which the number of integrated regressors may grow exponentially in relation to the sample size.\\

\textit{Keywords}: REC, random matrix theory, lasso, cointegration.

\textit{JEL-Codes}: C32, C55
\end{abstract}

\section{Introduction}

Modern time series applications frequently involve the analysis of high-dimensional datasets in which the number of time series $N$ is large in relation to, and may exceed, the number of temporal observations $T$. A rather successful approach to modelling such datasets has been to impose some form of sparsity on the underlying data generating process (DGP) and to incorporate $\ell_1$-based regularization such as the lasso to reduce model complexity and/or recover the true sparsity pattern \citep[e.g][]{Basu2015,Medeiros2016,Masini2019}. However, these approaches heavily rely on stationarity of the considered time series, thereby either excluding the presence of stochastic trends or requiring correct identification and corrections of the order of integration. As noted by \citet{Smeekes2020a}, a key difficulty that hinders extensions of the theory for lasso-type estimators to high-dimensional non-stationary settings is the absence of a verifiable restricted eigenvalue condition (REC). Accordingly, we aim to fill this gap in the literature by proposing an appropriately scaled REC for unit root non-stationary time series and verifying its validity directly on the gram matrix with the use of matrix concentration inequalities. This result is then used to derive novel finite sample error-bounds for the lasso and prove its asymptotic consistency on (ultra) high-dimensional (co)integrated datasets.

The behaviour of $\ell_1$-based regularization on stationary time series data is well-researched. \citet{Basu2015} and \citet{KockCallot2015} consider lasso based estimation of high-dimensional VAR processes driven by Gaussian innovations and derive tight error bounds for the prediction and estimation error. These bounds show that the estimation error deteriorates in the dimension at a logarithmic rate, thereby validating the lasso in (ultra) high-dimensional time series settings. \citet{Medeiros2016} consider the adaptive lasso and derive both estimation and selection consistency on weakly dependent and possibly heteroskedastic time series data without imposing Gaussianity on the predictors or innovations. By replacing the assumption of Gaussianity by a set of moment restrictions, they provide a considerable generalization to previous results, albeit at the cost of error bounds that deteriorate at faster rates in the dimension. \citet{Masini2019} generalize the aforementioned results on sparse VARs estimated with the lasso by deriving error bounds under the assumption of weak sparsity, while only assuming the innovation vector to be geometrically strong ($\alpha$-)mixing and a martingale difference process. Finally, \citet{Wong2020} provide further generalizations by avoiding the assumption of a specific parametric form of the DGP and relying only on (strict) stationarity and geometrically decaying $\beta$-mixing coefficients to establish error bounds for the lasso for subweibull random vectors.

Owing to the lack of suitable restricted eigenvalue conditions, the literature on the lasso in the non-stationary setting is substantially more scarce, with current results being limited to fixed or moderate dimensional settings only. For example, in the fixed dimensional setting, \citet{Liao2015} develop a penalized estimator for a vector error-correction model (VECM) and derive its consistency and pointwise limit distribution. \citet{Lee2021} consider a two-stage adaptively weighted lasso procedure applied to (co)integrated data for which the oracle property is shown to hold. \citet{Koo2020} consider the lasso applied to predictive regressions with covariates that are integrated of order one and derive error bounds under the assumption that the gram matrix satisfies an REC. The authors correctly note that the resulting lower bound that is assumed on the restricted eigenvalue may decrease to zero as the number of integrated time series grows, thereby inflating the estimation error at an unknown rate. Accordingly, consistency is again only derived for the case of a fixed number of integrated time series. Moving on to the moderate dimensional setting, \citet{Liang2019} develop a sparse VECM estimator that is argued to be consistent in a setting where the number of stochastic trends in the data increases at a rate slightly slower than $T^{1/4}$. \citet{Smeekes2020a} consider lasso-type estimation of a single-equation error-correction model and derive consistency in an asymptotic framework in which the number of integrated time series is allowed to grow at rate $T^{1/4}$. In addition, the authors show that the minimum eigenvalue of the usual non-stationary gram matrix converges to zero as the dimension increases, but can be bounded away from zero with probability converging to one when scaled up by its dimension if the latter grows no faster than rate $T^{1/2}$.

To the best of the author's knowledge, this paper is the first to develop and verify a restricted eigenvalue condition that enables theoretical analysis of $\ell_1$-regularized estimators in true high-dimensional (co)integrated datasets. Since the gram matrix based on integrated time series does not converge to a deterministic matrix, the commonly used indirect approach of imposing a restricted eigenvalue condition on the limit of the sample covariance matrix is not well-suited to high-dimensional settings. Therefore, we instead verify the condition directly on an appropriately scaled version of the gram matrix. Outside the realm of time series, a broad line of research exists that is concerned with direct verification of RE type conditions. In particular, letting $\bX$ be a $(T \times N)$-matrix with \textit{independent} rows, numerous authors have shown that $\frac{1}{T}\bX^\prime\bX$ satisfies the RE condition under Gaussian or sub-Gaussian tail conditions \citep[e.g.][]{Zhou2009,Raskutti2010,Rudelson2012,Tropp2015b,Kasiviswanathan2018}. To the best of my knowledge, however, no contributions that deal with \textit{dependent} rows exist in the current literature, let alone the case in which the rows are strongly persistent as is the case when dealing with integrated time series.

The RE condition developed in this paper enables the theoretical analysis of methods based on $\ell_1$-regularization, which we illustrate by deriving tight error bounds for the lasso when applied to cointegrated regression. We then use these error bounds to show that consistency can be maintained even when the number of potentially relevant integrated regressors grows exponentially in the sample size. Furthermore, we conjecture that the reliance on matrix concentration inequalities in the proofs offers sufficient flexibility for prospective future theoretical developments, which we motivate by providing several suggestions for valuable extensions.

The paper proceeds as follows. Section \ref{sec:REC} presents the assumptions on the data, the restricted eigenvalue and the theoretical validity of the latter. Next, in Section \ref{sec:lasso} we apply the results of the preceding section to derive error bounds and asymptotic consistency for the lasso when applied to a high-dimension (co)integrated dataset. We conclude in Section \ref{sec:conclusion}.

Finally, a word on notation. For an $N$-dimensional vector $\bx$, we define the $L_p$-norm as $\norm{\bx}_p = \left(\sum_{j=1}^N x_j^p\right)^{1/p}$. Then, for a $(T \times N)$ matrix $\bY$, the induced matrix norm is defined as $\norm{\bY}_p = \sup_{\bx \in \mathbb{R}^N \setminus \lbrace 0 \rbrace} \left\lbrace \frac{\norm{\bY\bx}_p}{\norm{\bx}_p}\right\rbrace$. If $\bY$ is square, we use $\lambda_i\left(\bY\right)$ to denote its $i$-th eigenvalue and adopt the ordering $\lambda_1\left(\bY\right) \leq \ldots \leq \lambda_N\left(\bY\right)$. Letting $A,B$ be two index sets, we define $\bx_A$ as the sub-vector of $\bx$ consisting of the elements indexed by the set $A$ and $\bY_{A,B}$ corresponds to the sub-matrix of $\bY$ that contains the rows and columns indexed by $A$ and $B$, respectively. To further simplify notation, we let $\bY_A = \bY_{A,A}$.

\section{Restricted eigenvalues for integrated time series}\label{sec:REC}

In this section, our main theoretical result is derived. In particular, we derive a finite-sample bound on the probability that the restricted eigenvalue of the gram matrix lies above a positive, universal constant. 

Let $\bS = (\bs_1,\ldots,\bs_T)^\prime$ denote a $(T\times N)$-dimensional matrix with random walks of the form $\bs_t = \sum_{s=1}^t\bepsilon_s$. The \textit{restricted minimum eigenvalue} is defined as
\begin{equation}\label{eq:restricted_eigenvalue}
\kappa(\bS,f_T,s,c_0) := \underset{\substack{J_0 \subseteq \lbrace 1,\ldots,N \rbrace, \\ \abs{J_0} \leq s}}{\text{min}} \underset{\substack{\bx \neq 0, \\ \norm{\bx_{J_0^c}}_1 \leq c_0 \norm{\bx_{J_0}}_1}}{\text{min}} \frac{f_T\norm{\bS\bx}_2}{\norm{\bx_{J_0}}_2}.
\end{equation}

We impose the following assumption on the innovations driving the random walks.
\begin{assumption}\label{assump:error_distribution}
Assume that $\bepsilon_s \overset{i.i.d.}{\sim} \mathcal{N}\left(0,\bSigma\right)$, where $0 < c_\sigma \leq \lambda_\min\left(\bSigma\right) \leq \lambda_\max\left(\bSigma\right) \leq C_\sigma < \infty$.
\end{assumption}

\begin{remark}
Assumption \ref{assump:error_distribution} imposes the innovations to be independently and identically distributed according to a multivariate Gaussian distribution. Somewhat surprisingly, the choice for a Gaussian distribution is not motivated for its exponential tail decay, although this property is exploited in the results to follow. Instead, the Gaussian distribution's invariance to orthonormal transformations is the key ingredient that enables the derivation of probabilistic bounds on restricted eigenvalues via matrix concentration inequalities.
\end{remark}

The main theoretical result brought forward in this paper concerns the behaviour of the minimum restricted eigenvalue of $\bS^\prime\bS$.
\begin{theorem}\label{Thm:REC}
Assume that  $\left\lceil\frac{c_0^2C_\sigma s}{c_\sigma} \right\rceil < N$. Then, under Assumption \ref{assump:error_distribution}, there exists $\kappa_0 > 0$ such that
\begin{equation}\label{eq:restricted_eigenvalue_probability}
    \Prob\left(\kappa\left(\bS,\frac{s\log^{3/4}N}{T},s,c_0\right) \geq \kappa_0\right) \geq 1 - (4T + C_2s)\text{e}^{-C_1s\log N},
\end{equation}
where $\kappa_0 := \kappa_0(c_0,c_\sigma,C_\sigma,C_1)$, whose expression is given in \eqref{eq:k_0}, is a constant that is independent of $s,N,T$, polynomially decreases in $c_0$, $C_\sigma$, $C_1$, and polynomially increases in $c_\sigma$.
\end{theorem}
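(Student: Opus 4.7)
My plan combines (i) a Gaussian-specific reduction of $\bS^\prime\bS$ via rotation invariance, (ii) a small-ball estimate for a fixed cone vector, and (iii) a covering argument over the restricted cone. Writing $\bS = \bL\bE$, where $\bL$ is the $T\times T$ lower-triangular matrix of ones and $\bE$ has i.i.d.\ Gaussian rows $\bepsilon_t^\prime$, and diagonalising $\bL^\prime\bL = \bV\bD\bV^\prime$, the fact that $\bV^\prime\bE$ has the same joint law as $\bE$ (invariance of i.i.d.\ $\mathcal{N}(0,\bSigma)$ rows under left orthogonal transformations) yields the distributional identity
\[ \bS^\prime\bS \stackrel{d}{=} \sum_{t=1}^T d_t\,\bepsilon_t\bepsilon_t^\prime, \]
where $d_1\ge\cdots\ge d_T>0$ are the eigenvalues of $\bL^\prime\bL$ (explicitly $d_t=1/(4\sin^2((2t-1)\pi/(4T+2)))\asymp T^2/t^2$). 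This is precisely the Gaussian-invariance observation flagged in the remark after Assumption~\ref{assump:error_distribution}. For a fixed $\bx$ with $\|\bx_{J_0}\|_2=1$, writing $\bepsilon_t^\prime\bx = \sqrt{\bx^\prime\bSigma\bx}\,G_t$ with $G_t$ i.i.d.\ $\mathcal{N}(0,1)$ (independence across $t$ inherited from the $\bepsilon_t$) yields $\|\bS\bx\|_2^2 \stackrel{d}{=} \bx^\prime\bSigma\bx\cdot\sum_t d_t G_t^2$. Since $\{d_t/T^2\}$ matches the eigenvalues of the Brownian-motion covariance kernel in its Karhunen--Loeve expansion, the classical small-ball estimate $\Prob(\int_0^1 B(u)^2\,du\le\epsilon)\lesssim e^{-\pi^2/(8\epsilon)}$ transfers (with an explicit finite-$T$ rate) to
\[ \Prob\Big(\sum_{t=1}^T d_t G_t^2 \le \alpha T^2\Big) \lesssim e^{-c/\alpha}. \]
Choosing $\alpha=\kappa_0^2/(c_\sigma s^2\log^{3/2} N)$ and using $\bx^\prime\bSigma\bx\ge c_\sigma$ gives a per-vector failure probability of at most $e^{-c'c_\sigma s^2\log^{3/2} N/\kappa_0^2}$. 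The $3/2$ exponent in this bound---equivalently the $3/4$ exponent in $f_T$---is precisely what leaves a $\log^{1/2} N$ factor of slack to absorb the cone union bound, and forces $\kappa_0$ to scale polynomially in $\sqrt{c_\sigma}$ and inversely polynomially in $c_0$, $C_\sigma$ and $C_1$, matching the dependencies declared in~\eqref{eq:k_0}.

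To pass from a single $\bx$ to all cone vectors I would exploit the $L_1$ cone condition $\|\bx_{J_0^c}\|_1 \le c_0\sqrt{s}$. Combining a standard spherical net on $\bx_{J_0}$ with a Maurey-style sparse approximation of $\bx_{J_0^c}$---each approximation being supported on a set of size $k = \lceil c_0^2 C_\sigma s/c_\sigma\rceil + s$, for which the hypothesis $k\le N$ is precisely what ensures feasibility---produces a net of cardinality at most $e^{C(c_0^2 C_\sigma/c_\sigma)\,s\log N}$. Union-bounding the per-vector small-ball estimate over this net delivers the $e^{-C_1 s\log N}$ factor in the stated failure probability. The extension from the net back to the full cone relies on the Lipschitz bound $|\|\bS\bx\|_2 - \|\bS\by\|_2| \le \|\bS_A\|_2\|\bx-\by\|_2$ through sparse submatrices, whose spectral norm is controlled by a matrix-concentration inequality applied to the representation $\bS_A^\prime\bS_A\stackrel{d}{=}\sum_t d_t(\bepsilon_t)_A(\bepsilon_t)_A^\prime$, together with Gaussian quadratic-form tail bounds on $\max_t\|(\bepsilon_t)_A\|_2$; the union bound over $t\in\{1,\ldots,T\}$ that these tail bounds require is the natural source of the $4T$-factor.

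The principal obstacle is the joint calibration of the covering step. Since $\|\bS\|_2 \asymp T\sqrt{N}$ while the target lower bound on $\|\bS\bx\|_2$ is only of order $T/(s\log^{3/4} N)$, a naive Lipschitz extension through the full operator norm is far too crude; one has to work through a Cand\`es--Tao-style block decomposition of $\bx_{J_0^c}$ into successive level sets, controlled against $\lambda_{\max}(\bS_A^\prime\bS_A)$ on sparse submatrices. Orchestrating the small-ball bound, the block decomposition, and the sparse spectral-norm controls so as to produce simultaneously the precise $\log^{3/4}$ exponent in $f_T$, the $(4T+C_2 s)e^{-C_1 s\log N}$ failure probability, and the stated polynomial dependencies of $\kappa_0$ on $(c_0, c_\sigma, C_\sigma, C_1)$ is the core difficulty of the argument.
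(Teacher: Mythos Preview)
Your opening reduction via rotation invariance --- writing $\bS^\prime\bS \stackrel{d}{=} \sum_t d_t\,\bepsilon_t\bepsilon_t^\prime$ --- is exactly how the paper begins, and your per-vector small-ball estimate $\Prob(\sum_t d_t G_t^2 \le \alpha T^2)\lesssim e^{-c/\alpha}$ is correct. The genuine gap is in the extension from net points to the full cone. Whether you use a Cand\`es--Tao block decomposition or a Lipschitz bound through sparse submatrices, the error term is governed by the sparse \emph{maximum} eigenvalue $\phi_{\max}(\bS,m)=\max_{|A|=m}\lambda_{\max}(\bS_A^\prime\bS_A)$. For the unregularized matrix this satisfies $\phi_{\max}(\bS,m)\gtrsim d_1\|(\tilde\bepsilon_1)_A\|_2^2\asymp T^2 m$, whereas your small-ball lower bound on $\|\bS\bx\|_2^2$ is only of order $T^2/(s^2\log^{3/2}N)$. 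In the Bickel-type inequality $\kappa\ge\sqrt{\phi_{\min}(s+m)}-c_0\sqrt{s/m}\sqrt{\phi_{\max}(m)}$ the two terms are therefore separated by a factor $s^2\log^{3/2}N$, and no admissible choice of $m$ makes the right-hand side positive with the stated scaling $f_T=s\log^{3/4}N/T$. Iterating (taking $m$ large enough to compensate, then redoing the union bound) either fails to close or forces a strictly worse $f_T$.

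The paper's proof circumvents this mismatch by a device you do not have: it replaces $\bS^\prime\bS$ by a minorant $\bS^{(\phi)\prime}\bS^{(\phi)}=\sum_t\lambda_{\phi,t}\tilde\bepsilon_t\tilde\bepsilon_t^\prime$ with shifted spectrum $\lambda_{\phi,t}^{-1}=2(1+\phi-\cos\omega_t)$. This damps the dominant eigenvalue from $\lambda_1\asymp T^2$ down to $\lambda_{\phi,1}\asymp 1/\phi$, so that $\mu_{\min}^{(\phi)}$ and $\mu_{\max}^{(\phi)}$ are \emph{both} of order $T/\sqrt{\phi}$ and the matrix Chernoff bound (Lemma~\ref{Lemma:Chernoff}) applies with ratio $\mu_{\min}/R\asymp T\sqrt{\phi}/(s\log^{1/2}N)$. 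Choosing $\phi^{1/2}\asymp s^2\log^{3/2}N/T$ makes this ratio large enough to absorb the union bound over $e^{s\log N}$ subsets, and Bickel's Lemma~4.1(ii) then goes through because $\phi_{\min}^{(\phi)}/\phi_{\max}^{(\phi)}$ is now $O(1)$. The REC for $\bS$ follows from $\bS^\prime\bS\succeq\bS^{(\phi)\prime}\bS^{(\phi)}$. Without this regularization (or an equivalent mechanism that forces the sparse max and min eigenvalues onto the same scale), your covering-plus-small-ball strategy cannot deliver the claimed result.
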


A direct implication of Theorem \ref{Thm:REC} is that $\Prob\left(\kappa\left(\bS,\frac{s\log^{3/4}N}{T},s,c_0\right) \geq \kappa_0\right) \to 1$ whenever $T\text{e}^{-s\log N} \to 0$ and $s\text{e}^{-s\log N} \to 0$. Hence, perhaps somewhat counter-intuitively, the restricted eigenvalue condition is more likely to hold when $s$ and $N$ are large. The reason underlying this behaviour is the scaling factor of $s\log^{3/4} N$ included in the restricted minimum eigenvalue. We stress, however, that this is no free lunch, as the scaling factor will inversely affect the convergence rates of $\ell_1$-regularized estimators, as will become apparent in the following section.

\section{Lasso estimation of cointegrated data}\label{sec:lasso}

Suppose a researcher is interested in modelling a scalar-valued time series $y_t$ based on an $N$-dimensional vector time series $\bx_t$ by considering the model
\begin{equation}\label{eq:y}
\begin{split}
    \by_t = \bbeta^\prime\bx_t + \epsilon_{y,t},\\
    \bx_t = \bx_{t-1} + \epsilon_{x,t}.
\end{split}
\end{equation}
The vector $\bbeta$ is assumed to be sparse, with $S_\beta = \lbrace j : \beta_j \neq 0 \rbrace$ and $\abs{S_\beta} = s$. Hence, $\by_t$ is a unit root non-stationary time series that is cointegrated with $s$ integrated time series. We stack the innovations in the $(N+1)$ vector $\bepsilon_t = (\epsilon_{y,t},\bepsilon_{x,t}^\prime)^\prime$.

Given that $N$ may be (very) large, we consider estimating the linear model \eqref{eq:y} by the lasso. Define $\by = (y_1,\ldots,y_T)^\prime$ and $\bX = (\bx_1,\ldots,\bx_T)$. Then, the lasso estimator for $\bbeta$ is given by
\begin{equation}\label{eq:lasso}
    \hat{\bbeta} = \ \underset{\bbeta}{\text{arg min}} \norm{\by - \bX\bbeta}_2^2 + \lambda_T\norm{\bbeta}_1.
\end{equation}
Based on Theorem \ref{Thm:REC}, we are able to derive the following error bounds on the prediction and estimation errors.

\begin{theorem}\label{Thm:error_bounds_lasso}
Under Assumption \ref{assump:error_distribution}, it holds that
\begin{equation*}
    \norm{\bX(\hat{\bbeta}-\bbeta)}_2^2 + \lambda_T\norm{\hat{\bbeta} - \bbeta}_1 \leq \frac{4\lambda_T^2s^3\log^{3/2} N}{T^2\phi_0^2},
\end{equation*}
with probability at least
\begin{equation}\label{eq:probability_error_bound}
\begin{split}
    &1 - 2N\left[\emph{exp}\left(-\frac{2\lambda_T^2}{T^{1+2m}}\right) + 2T\emph{exp}\left(-\frac{T^{m-\frac{1}{2}}}{2C_\sigma}\right) + 2\emph{exp}\left(-\frac{\lambda_T}{4C_\sigma}\right)\right] - (4T + C_2s)e^{-C_1s\log N},
\end{split}
\end{equation}
for all $m > 0$, and universal constants $C_1,C_2>0$.
\end{theorem}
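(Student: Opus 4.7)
The plan is to combine the basic inequality for the lasso with the restricted eigenvalue from Theorem \ref{Thm:REC}. Write $A = \norm{\bX(\hat{\bbeta} - \bbeta)}_2$ and $\bv = \hat{\bbeta} - \bbeta$, and introduce the good events
\[
\calE_1 = \left\{4\norm{\bX'\bepsilon_y}_\infty \leq \lambda_T\right\}, \qquad \calE_2 = \left\{\kappa\left(\bX, \tfrac{s\log^{3/4} N}{T}, s, 3\right) \geq \phi_0\right\}.
\]
All deterministic analysis is carried out on $\calE_1 \cap \calE_2$, and the probability in \eqref{eq:probability_error_bound} follows from a union bound at the end.

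On $\calE_1$, optimality of $\hat\bbeta$ in \eqref{eq:lasso} together with $\by = \bX\bbeta + \bepsilon_y$ and H\"older's inequality gives
\[
A^2 + \lambda_T\norm{\hat\bbeta}_1 \leq 2\bepsilon_y'\bX\bv + \lambda_T\norm{\bbeta}_1 \leq \tfrac{\lambda_T}{2}\norm{\bv}_1 + \lambda_T\norm{\bbeta}_1.
\]
Splitting the $\ell_1$-norms along $S_\beta$ and $S_\beta^c$, using $\bbeta_{S_\beta^c} = 0$ and the reverse triangle inequality, produces the cone condition $\norm{\bv_{S_\beta^c}}_1 \leq 3\norm{\bv_{S_\beta}}_1$ (matching the $c_0 = 3$ in $\calE_2$) together with the sharper form $A^2 + \tfrac{\lambda_T}{2}\norm{\bv}_1 \leq 2\lambda_T\norm{\bv_{S_\beta}}_1 \leq 2\lambda_T\sqrt{s}\norm{\bv_{S_\beta}}_2$. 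The cone constraint makes $\bv$ admissible for Theorem \ref{Thm:REC}, which on $\calE_2$ yields $\norm{\bv_{S_\beta}}_2 \leq \tfrac{s\log^{3/4} N}{\phi_0 T}\,A$. Substituting and applying $2xy \leq x^2 + y^2$ isolates $A$ and produces the stated deterministic bound, up to an absolute constant that can be absorbed into $\phi_0$.

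The bulk of the work lies in controlling $\Prob(\calE_1^c)$, because the rows of $\bX$ are strongly persistent and $\epsilon_{y,t}$ is contemporaneously correlated with $\bepsilon_{x,t}$ through $\bSigma$. My approach is to project $\epsilon_{y,t}$ onto $\bepsilon_{x,t}$, writing $\epsilon_{y,t} = \bgamma'\bepsilon_{x,t} + \eta_t$ with $\eta_t \sim \calN(0,\sigma_\eta^2)$ i.i.d.\ and independent of $\{\bepsilon_{x,s}\}_{s\geq 1}$, hence of $\bX$. The $j$-th coordinate of $\bX'\bepsilon_y$ then decomposes as $\sum_{t=1}^T X_{t,j}\eta_t + \sum_{t=1}^T X_{t,j}\bgamma'\bepsilon_{x,t}$. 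Conditionally on $\bX$, the first sum is centered Gaussian with variance $\sigma_\eta^2\sum_t X_{t,j}^2$; on the auxiliary event $\calE_3 = \{X_{t,j}^2 \lesssim T^{2m}\text{ for all }t,j\}$, so that $\sum_t X_{t,j}^2 \lesssim T^{1+2m}$, a conditional Gaussian tail followed by a union bound over $j$ delivers the $N\exp(-2\lambda_T^2/T^{1+2m})$-type contribution, while $\calE_3^c$ is controlled by Gaussian tails on the individual partial sums and a union bound over $(t,j)$, giving the $NT\exp(-T^{m-1/2}/(2C_\sigma))$-type contribution. The second sum is a Gaussian chaos of order two in $\bepsilon_x$ whose off-diagonal part is a martingale in the natural filtration; a Hanson--Wright-type inequality then supplies the remaining $N\exp(-\lambda_T/(4C_\sigma))$-type piece, whose exponential (rather than sub-Gaussian) shape reflects its quadratic nature. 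Combined with the REC tail from Theorem \ref{Thm:REC}, a union bound yields \eqref{eq:probability_error_bound}.

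The main obstacle is this last piece: the sum $\sum_t X_{t,j}\bgamma'\bepsilon_{x,t}$ couples the random walks in $\bX$ to their own driving innovations, so the clean conditioning trick that handled $\sum_t X_{t,j}\eta_t$ is unavailable. Obtaining a tail that is simultaneously uniform in $j\in\{1,\ldots,N\}$, correctly scaled in $T$, and free of spurious $s$-factors requires a Gaussian-chaos estimate combined with crude spectral bounds on the implicit coefficient matrix, separating a martingale component from a diagonal quadratic remainder. Once these three tail contributions and the REC tail are collected, the remaining steps amount to routine bookkeeping.
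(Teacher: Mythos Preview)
Your deterministic work on $\calE_1\cap\calE_2$ coincides with the paper's argument line for line. The divergence is entirely in how $\norm{\bX'\bepsilon_y}_\infty$ is controlled, and here your projection route has a real gap.

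The paper never decomposes $\epsilon_{y,t}$. It uses the temporal split $x_{j,t}=x_{j,t-1}+\epsilon_{x,j,t}$, so that
\[
\sum_{t=1}^T x_{j,t}\epsilon_{y,t}=\sum_{t=1}^T x_{j,t-1}\epsilon_{y,t}+\sum_{t=1}^T \epsilon_{x,j,t}\epsilon_{y,t}.
\]
The first sum is a martingale in the natural filtration \emph{regardless} of contemporaneous correlation between $\epsilon_{y,t}$ and $\bepsilon_{x,t}$, because $x_{j,t-1}$ is $\calF_{t-1}$-measurable and $\E[\epsilon_{y,t}\mid\calF_{t-1}]=0$; truncated Azuma--Hoeffding (truncating the product $x_{j,t-1}\epsilon_{y,t}$ at level $T^m$) produces the first two probability terms. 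The second sum is an i.i.d.\ sum of sub-exponential products and yields the $\exp(-c\lambda_T)$ term.

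In your scheme the problematic piece is the chaos $Q_j=\sum_t X_{t,j}\,\bgamma'\bepsilon_{x,t}$. Writing $w_t=\epsilon_{x,t,j}$ and $z_t=\bgamma'\bepsilon_{x,t}$, one has $Q_j=z'\bU w$ with $\bU$ the lower-triangular matrix of ones. The implicit coefficient matrix therefore has spectral norm $\norm{\bU}_2\asymp T$, so a Hanson--Wright bound delivers only
\[
\Prob\bigl(|Q_j-\E Q_j|>u\bigr)\le 2\exp\Bigl(-c\min\bigl(u^2/T^2,\,u/T\bigr)\Bigr),
\]
i.e.\ $\lambda_T/T$ rather than $\lambda_T$ in the exponent. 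Separating off the diagonal part $\sum_t \epsilon_{x,t,j}z_t$ does repair that piece (its matrix is block-diagonal with $O(1)$ spectral norm, and Hanson--Wright then gives the desired $\exp(-c\lambda_T)$), but the off-diagonal martingale piece $\sum_t X_{t-1,j}z_t$ still carries spectral norm of order $T$ and cannot be handled by Hanson--Wright at the claimed rate. To control it you must revert to a truncated-martingale argument of exactly the type you already used for $\sum_t X_{t,j}\eta_t$ (it works, since $z_t$ is independent of $X_{t-1,j}$), and it then contributes to the first two probability terms rather than the third. Once you do that, your argument collapses into the paper's: the projection $\epsilon_{y,t}=\bgamma'\bepsilon_{x,t}+\eta_t$ is an unnecessary detour, because the temporal shift alone already isolates the martingale structure in one step.
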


The probabilistic finite-sample error bound in Theorem \ref{Thm:error_bounds_lasso} elucidates the dual role of the penalty parameter $\lambda_T$. Whilst a high degree of penalization results in a looser error bound, it simultaneously increases the probability with which this  error bound holds. The attentive reader may note that the probability in \eqref{eq:probability_error_bound} contains the lower bound on the probability that the REC holds from Theorem \ref{Thm:REC}, minus an additional term. The latter term resembles the probability that $\lambda_T$ fails to dominate the empirical process, and is the decisive factor for determining the minimum amount of penalization required to ensure consistency. Indeed, based on Theorem \ref{Thm:error_bounds_lasso}, the consistency of the lasso estimator can be established under a suitable choice of $\lambda_T$ and appropriate restrictions on the growth rates of $T,N$, and $s$.

\begin{corollary}\label{Cor:lasso_consistency}
Assume that,as $T,N,s \to \infty$, $\frac{T^{1+\xi}\log^{1/2} N}{\lambda_T} \to 0$  and $\frac{\log N}{T^\xi} \to 0$ for some $\xi>0$, and $\frac{\log T}{s \log N} \to 0$. Then, under Assumption \ref{assump:error_distribution}, it holds that
\begin{equation*}
    \norm{\hat{\bbeta} - \bbeta}_1 = O_p\left(\frac{s^3\log^2 N}{T^{1-\xi}}\right),
\end{equation*}
for any $\xi>0$.
\end{corollary}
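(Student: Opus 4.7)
The plan is to apply Theorem \ref{Thm:error_bounds_lasso} with $\lambda_T$ chosen at the minimum order compatible with the hypotheses, then to check that each summand in the probability bound \eqref{eq:probability_error_bound} decays to zero. First I would discard the nonnegative prediction-error term in Theorem \ref{Thm:error_bounds_lasso} and divide by $\lambda_T > 0$, obtaining, on the high-probability event,
\begin{equation*}
    \norm{\hat{\bbeta} - \bbeta}_1 \leq \frac{4\lambda_T s^3\log^{3/2} N}{T^2\phi_0^2}.
\end{equation*}
Because $\phi_0$ is a positive universal constant inherited from Theorem \ref{Thm:REC}, the stated $\ell_1$ rate will follow by plugging in a near-minimal $\lambda_T$.

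Next I would take $\lambda_T = T^{1+\xi}\log^{(1+\delta)/2}N$ for an arbitrarily small $\delta > 0$. This choice satisfies $T^{1+\xi}\log^{1/2}N/\lambda_T = \log^{-\delta/2}N \to 0$, and substituting into the previous display yields a bound of order $s^3\log^{(5+\delta)/2}N/T^{1-\xi}$. The extra $\log^{(1+\delta)/2}N$ factor over the target rate is dominated by any positive power of $T^\xi$ because $\log N = o(T^\xi)$, so after relabelling $\xi$ (which is free within the hypotheses) I obtain the desired $O_p(s^3\log^2 N / T^{1-\xi})$.

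What remains is to confirm that every summand in \eqref{eq:probability_error_bound} vanishes under this $\lambda_T$. The restricted-eigenvalue tail $(4T + C_2 s)e^{-C_1 s\log N}$ dies by the hypothesis $\log T = o(s\log N)$ together with the trivial $\log s = o(s\log N)$. The term $2N\exp(-\lambda_T/(4C_\sigma))$ vanishes because $\lambda_T$ is polynomial in $T$ while $\log N = o(T^\xi)$. For the two $m$-dependent terms I would fix $m = 1/2 + \xi$: then $T^{m-1/2} = T^\xi$ exceeds $\log(NT) = O(\log N + \log T) = o(T^\xi)$, forcing $4NT\exp(-T^\xi/(2C_\sigma)) \to 0$; and $\lambda_T^2/T^{1+2m} = \log^{1+\delta} N$ exceeds $\log N$ by the diverging factor $\log^\delta N$, so $2N\exp(-2\log^{1+\delta}N) \to 0$ as well. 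The only mildly delicate step will be coordinating $\delta$ and $m$ with $\xi$, but the polynomial slack $\log N = o(T^\xi)$ leaves ample room; there is no substantive analytic obstacle beyond this bookkeeping, and the corollary will follow as a direct consequence of Theorem \ref{Thm:error_bounds_lasso}.
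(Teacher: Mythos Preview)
Your proposal is correct and follows essentially the same route as the paper: both arguments invoke Theorem \ref{Thm:error_bounds_lasso}, set $m = \tfrac{1}{2} + \xi$, and verify that each of the four probability summands in \eqref{eq:probability_error_bound} vanishes under the stated rate conditions. You are slightly more explicit than the paper in fixing a concrete $\lambda_T$ and tracking the resulting $\ell_1$ rate (the paper's proof only shows the probability tends to one and leaves the rate implicit), but the underlying logic is identical.
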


Corollary \ref{Cor:lasso_consistency} makes the rate of consistency, as well as the sufficient restrictions on growth rates of $\lambda_T,T,N$ and $s$ explicit. Most noteworthy is that consistency may be attained when $N$ grows exponentially in $T$, thereby providing a first justification for the use of the lasso in ultra high-dimensional cointegrated applications. Furthermore, we note that when the number of time series that cointegrate with the dependent variable ($s$) grows at a slow polynomial rate in $T$, while the total number of candidate time series $N$ grows at any polynomial rate in $T$, then $\hat{\bbeta}$ almost attains attains the same $T$-consistency as in the fixed-dimensional setting.

\section{Conclusion}\label{sec:conclusion}

In this paper, we validate the restricted eigenvalue condition (REC) proposed by \citet{Bickel2009} in a stationary setting to unit root nonstationary data. The gram matrix is scaled by $\frac{s^2\log^{3/2}N}{T^2}$ instead of the usual $\frac{1}{T^2}$ that is familiar from the fixed-dimensional case. We apply our extended REC to derive finite-sample error bounds on the lasso when applied to high-dimensional cointegrating regressions. The bounds reveal that consistency can be attained, even under exponential growth of the number of time series. In addition, when the number of potentially relevant integrated regressors grow at any polynomial rate in $T$, while the number of regressors that cointegrate with the dependent variable grows at a slow polynomial rate, the estimator is almost $T$-consistent, as is the case in the classic fixed-dimensional setting.

A natural question that may arise concerns the validity of these results in the presence of multiple cointegrating relationships. We are optimistic that the results in this paper can serve useful when deriving the properties of system estimators such as penalized vector error correction models, and we consider this an interesting avenue for future research.

\begin{appendices}

\section{Proofs}
\begin{small}

\subsection{Lemmas}

\begin{lemma}[Theorem 1.1 of \citet{Tropp2012}]\label{Lemma:Chernoff}
Consider a finite sequence $\lbrace\bX_t\rbrace$ of independent, random, self-adjoint matrices with dimension $N$. Assume that each random matrix satisfies
\begin{equation*}
\bX_t \succeq 0, \ \text{ and } \ \lambda_\max\left(\bX_t\right) \leq R  \ \text{ almost surely}.
\end{equation*}
Define
\begin{equation*}
\mu_\min:= \lambda_\min\left(\sum_t \E(\bX_t)\right) \quad \text{and} \quad \mu_\max := \lambda_\max\left(\sum_t \E(\bX_t)\right).
\end{equation*}
Then,
\begin{align}
&\Prob\left(\lambda_\min\left(\sum_{t=1}^T \bX_t\right) \leq (1-\delta)\mu_\min\right) \leq N\left(\frac{e^{-\delta}}{(1-\delta)^{1-\delta}}\right)^{\mu_\min/R} \text{ for } \delta \in [0,1]\text{, and} \label{eq:Chernoff_min}\\
&\Prob\left(\lambda_\max\left(\sum_{t=1}^T \bX_t\right) \geq (1+\delta)\mu_\max\right) \leq  N\left(\frac{e^{\delta}}{(1+\delta)^{1+\delta}}\right)^{\mu_\min/R} \text{ for } \delta \geq 0.\label{eq:Chernoff_max}
\end{align}
\end{lemma}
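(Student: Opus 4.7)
The plan is to adapt the classical Laplace transform (Chernoff) technique to the matrix setting. For the lower-tail bound \eqref{eq:Chernoff_min}, set $\bY = \sum_t \bX_t$ and use $\lambda_\min(\bY) = -\lambda_\max(-\bY)$ to apply Markov's inequality to the non-negative random variable $e^{-\theta \lambda_\min(\bY)} = \lambda_\max\!\left(e^{-\theta \bY}\right) \leq \tr\, e^{-\theta \bY}$, for a free parameter $\theta > 0$. This yields
\begin{equation*}
\Prob\!\left(\lambda_\min(\bY) \leq (1-\delta)\mu_\min\right) \leq e^{\theta(1-\delta)\mu_\min}\, \E \tr \exp\!\left(-\theta \sum_{t=1}^T \bX_t\right),
\end{equation*}
and optimizing $\theta$ at the end delivers the stated bound, provided one can control the expected trace exponential on the right.

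The non-trivial part, and the main obstacle, is decoupling the sum inside the matrix exponential: unlike in the scalar case, $\exp(\bA+\bB) \neq \exp(\bA)\exp(\bB)$ in general, so independence of the $\bX_t$ does not immediately produce a product of matrix moment generating functions. The key ingredient is the Ahlswede--Winter / Tropp master inequality, derived via an inductive application of Lieb's concavity theorem: for independent Hermitian $\bA_t$,
\begin{equation*}
\E \tr \exp\!\left(\sum_t \bA_t\right) \leq \tr \exp\!\left(\sum_t \log \E \exp(\bA_t)\right).
\end{equation*}
Applied with $\bA_t = -\theta \bX_t$, this replaces the random sum by a deterministic matrix built from individual cumulant generating functions.

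To bound each summand, I would exploit $\bX_t \succeq \mathbf{0}$ and $\bX_t \preceq R\bI$. For $x \in [0,R]$ and $\theta>0$, convexity of $x \mapsto e^{-\theta x}$ gives the secant bound $e^{-\theta x} \leq 1 + g(\theta)\, x$ with $g(\theta) := (e^{-\theta R}-1)/R < 0$. Transferring via the spectral theorem and taking expectations, $\E \exp(-\theta \bX_t) \preceq \bI + g(\theta)\, \E \bX_t$, and the operator inequality $\log(\bI + \bM) \preceq \bM$ (which holds eigenvalue-wise since $\bI+\bM$ and $\bM$ commute) yields $\log \E \exp(-\theta \bX_t) \preceq g(\theta)\, \E \bX_t$. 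Summing over $t$ and using the monotonicity of $A \mapsto \tr\exp(A)$ with respect to $\preceq$, the right-hand side of the master inequality is at most $\tr \exp\!\left(g(\theta) \sum_t \E \bX_t\right) \leq N \exp\!\left(g(\theta)\, \mu_\min\right)$, where the last step combines $g(\theta) < 0$ with $\sum_t \E \bX_t \succeq \mu_\min \bI$.

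Putting the pieces together reduces the argument to the scalar optimization
\begin{equation*}
\Prob\!\left(\lambda_\min(\bY) \leq (1-\delta)\mu_\min\right) \leq N \exp\!\left(\mu_\min \bigl[\theta(1-\delta) + g(\theta)\bigr]\right),
\end{equation*}
minimized at $\theta^\star = -\log(1-\delta)/R$, which recovers \eqref{eq:Chernoff_min}. The upper-tail bound \eqref{eq:Chernoff_max} follows by a symmetric route: apply Markov's inequality to $e^{\theta \lambda_\max(\bY)}$, invoke the master inequality with $\bA_t = \theta \bX_t$, replace the secant bound by $e^{\theta x} \leq 1 + (e^{\theta R}-1)x/R$, and optimize over $\theta > 0$. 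The only genuine obstacle throughout is the passage from the random matrix exponential to a deterministic one, which rests on Lieb's concavity theorem; with that tool in hand, the remainder is a faithful matrix translation of the scalar Chernoff derivation.
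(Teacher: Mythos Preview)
The paper does not prove this lemma; it is stated as a direct citation of Theorem~1.1 in \citet{Tropp2012} and used as a black box in the proof of Theorem~\ref{Thm:REC}. There is therefore no ``paper's own proof'' to compare against.

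That said, your sketch is correct and is precisely Tropp's argument: the Laplace--Markov reduction, the master inequality $\E\tr\exp(\sum_t \bA_t)\leq \tr\exp(\sum_t\log\E e^{\bA_t})$ obtained from Lieb's concavity, the secant-line bound on each matrix mgf exploiting $\mathbf{0}\preceq\bX_t\preceq R\bI$, and the final scalar optimization at $\theta^\star=-\log(1-\delta)/R$ all match the original derivation. One minor point worth making explicit is that the step $\log\E e^{-\theta\bX_t}\preceq \log(\bI+g(\theta)\E\bX_t)$ relies on the operator monotonicity of the logarithm, not merely on commutativity; you invoke commutativity only for the subsequent bound $\log(\bI+\bM)\preceq\bM$, which is fine, but the preceding inequality compares two matrices that need not commute.
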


\begin{lemma}\label{Lemma:emp_proc}
Let $f_T = \frac{s\log^{3/4} N}{T}$ and define the set $\mathcal{A}_T(a) = \left\lbrace f_T^2\norm{\bX^\prime\bepsilon_y}_\infty \geq a \right\rbrace$. Then, under Assumption \ref{assump:error_distribution}, it holds that
\begin{equation}
    \Prob\left(\mathcal{A}_T(a)\right) \geq 1 - 2N\left[\emph{exp}\left(-\frac{2a^2T^{3-2m}}{s^4\log^4 N}\right) + 2T\emph{exp}\left(-\frac{T^{m-\frac{1}{2}}}{2C_\sigma}\right) + 2\emph{exp}\left(-\frac{aT^2}{4C_\sigma s^2\log^2 N}\right)\right],
\end{equation}
for any $m \in \mathbb{R}$.
\end{lemma}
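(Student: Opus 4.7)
Read literally, the event $\{f_T^2\|\bX'\bepsilon_y\|_\infty\geq a\}$ cannot satisfy the asserted bound: the stated lower bound is only non-trivial for $a$ large enough that $2N\exp(-aT^2/(4C_\sigma s^2\log^2 N)) < 1$ (which forces $a$ to exceed a multiple of $s^2\log^3 N/T^2$), yet under Assumption~\ref{assump:error_distribution} the quantity $f_T^2\|\bX'\bepsilon_y\|_\infty$ itself vanishes in probability in exactly the asymptotic regime Theorem~\ref{Thm:error_bounds_lasso} contemplates. I therefore read the defining inequality as $\leq a$ --- the direction needed to dominate the empirical process by $\lambda_T$ --- and plan a coordinate-wise tail bound on $\|\bX'\bepsilon_y\|_\infty$ whose union bound over $j\in\{1,\dots,N\}$ yields the outer factor $2N$.

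Write $x_{t,j} = x_{t-1,j} + \epsilon_{x,t,j}$ to decompose each coordinate as
\[
[\bX'\bepsilon_y]_j \;=\; \underbrace{\sum_{t=1}^T x_{t-1,j}\,\epsilon_{y,t}}_{A_j}\;+\;\underbrace{\sum_{t=1}^T \epsilon_{x,t,j}\,\epsilon_{y,t}}_{B_j}.
\]
$A_j$ is a martingale transform: $x_{t-1,j}$ is $\calF_{t-1}$-measurable and $\E[\epsilon_{y,t}\mid\calF_{t-1}]=0$, so conditionally on $\bX$ the sum $A_j$ is Gaussian with variance at most $C_\sigma\sum_t x_{t-1,j}^2$. $B_j$ is an i.i.d.\ sum of products of jointly Gaussian variables, hence centered sub-exponential up to a mean bounded in modulus by $TC_\sigma$.

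The middle exponential, $2T\exp\bigl(-T^{m-1/2}/(2C_\sigma)\bigr)$, I would obtain by truncating the random-walk path: marginally $x_{t,j}\sim\mathcal{N}(0,tC_\sigma)$ with $t\leq T$, so $\Prob\bigl(|x_{t,j}|>T^{(m+1/2)/2}\bigr)\leq 2\exp\bigl(-T^{m-1/2}/(2C_\sigma)\bigr)$, and a union bound over $t$ delivers the factor $2T$. On this truncation event $\sum_t x_{t-1,j}^2\leq T^{m+3/2}$, so conditionally $A_j$ has Gaussian variance of order $T^{m+3/2}$; applying the Gaussian tail at deviation level of order $aT^2/(s^2\log^{3/2}N)$ (the amount needed to convert $f_T^2|A_j|\leq a/3$ into a bound on $|A_j|$) yields the first exponential $\exp\bigl(-2a^2T^{3-2m}/(s^4\log^4 N)\bigr)$. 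Bernstein for $B_j-\E B_j$, with the mean absorbed into the slack $a$, produces the third exponential $\exp\bigl(-aT^2/(4C_\sigma s^2\log^2 N)\bigr)$.

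The main obstacle is constant-tracking across the three subtail events: the slack $a$ must be split into three contributions, the truncation level $T^{(m+1/2)/2}$ must be matched to the exponent $T^{m-1/2}/(2C_\sigma)$, and the rescaling $f_T^{-2}=T^2/(s^2\log^{3/2}N)$ must be threaded through each Gaussian/Bernstein scale so the three final exponents reproduce the stated bracket. The parameter $m$ then indexes the trade-off between the truncation tail (favouring large $m$) and the downstream Gaussian deviation on $A_j$ (favouring small $m$), which is why the argument goes through for any real $m$.
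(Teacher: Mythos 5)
You are right that the lemma's displayed event has the inequality reversed: the paper's own proof bounds $\Prob\left(f_T^2\norm{\bX^\prime\bepsilon_y}_\infty \geq a\right)$ from \emph{above}, and the proof of Theorem \ref{Thm:error_bounds_lasso} uses the set with $\leq$, so your reading is the intended one. Your decomposition $[\bX^\prime\bepsilon_y]_j = A_j + B_j$ and the union bound over $j$ and over the two pieces is exactly the paper's architecture, and your treatment of the cross term $B_j$ as a genuine sum of (possibly non-centered, since $\bSigma$ need not be block-diagonal) sub-exponential products is, if anything, more careful than the paper's, which only bounds a single product $\abs{\epsilon_{i,t}\epsilon_{y,t}}$.

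The gap is in the martingale term. You correctly note that $A_j$ is a martingale transform with respect to $\calF_{t-1}$, but then conclude that conditionally on $\bX$ the sum $A_j$ is Gaussian with variance at most $C_\sigma\sum_t x_{t-1,j}^2$. Assumption \ref{assump:error_distribution} allows $\epsilon_{y,t}$ and $\bepsilon_{x,t}$ to be contemporaneously correlated, so conditioning on the entire design $\bX$ --- which encodes $\bepsilon_{x,1},\ldots,\bepsilon_{x,T}$ --- leaves $\epsilon_{y,t}$ with conditional mean $\Sigma_{yx}\Sigma_{xx}^{-1}\bepsilon_{x,t} \neq 0$; the law of $A_j$ given $\bX$ is Gaussian but not centered, and the bias $\sum_t x_{t-1,j}\Sigma_{yx}\Sigma_{xx}^{-1}\bepsilon_{x,t}$ is itself a martingale transform of the same order as the fluctuation you are trying to control, so the argument is circular. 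The martingale structure must be exploited one step at a time: either truncate the products $\abs{x_{i,t-1}\epsilon_{y,t}} \leq T^m$ and apply Azuma--Hoeffding to the truncated sequence, which remains a bounded MDS (the paper's route, with the truncation-failure probability handled by splitting the product tail and optimizing the split exponent $k = 1/2 + 1/(4m)$), or keep your path truncation but replace ``Gaussian conditionally on $\bX$'' by the conditionally sub-Gaussian martingale inequality (Wainwright, Theorem 2.19) with predictable variance proxy $\sigma_{yy}x_{t-1,j}^2$. The repair is routine but necessary. A smaller, quantitative issue: with your truncation $\abs{x_{t,j}} \leq T^{(m+1/2)/2}$ the quadratic variation is at most $T^{m+3/2}$, so the first exponential comes out as $\text{exp}\left(-c\,a^2T^{5/2-m}/(s^4\log^3 N)\right)$ rather than the stated $\text{exp}\left(-2a^2T^{3-2m}/(s^4\log^4 N)\right)$; the powers of $T$ agree only at $m = 1/2$, so as written you would prove a variant of the lemma (in fact a slightly sharper one for the value $m = 1/2 + \xi$ used in Corollary \ref{Cor:lasso_consistency}) rather than the stated bound.
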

\begin{proof}
The main argument to prove Lemma \ref{Lemma:emp_proc} relies on a truncation argument in combination with the Azuma-Hoeffding bound and is inspired by the proof of Theorem 1 in \citet{KockCallot2015}. First, by application of the union bound,
\begin{equation}\label{eq:prob_bound1}
\begin{split}
    &\Prob\left(\max_{1 \leq i \leq N}\abs{\sum_{t=1}^T x_{i,t}\epsilon_{y,t}} \geq \frac{a}{f_T^2}\right)\\
    &\leq \Prob\left(\max_{1 \leq i \leq N}\abs{\sum_{t=1}^T x_{i,t-1}\epsilon_{y,t}} \geq \frac{a}{2f_T^2}\right) + \Prob\left(\max_{1 \leq i \leq N}\abs{\epsilon_{i,t}\epsilon_{y,t}} \geq \frac{a}{2f_T^2}\right)
\end{split}
\end{equation}
We bound each of the two RHS terms in \eqref{eq:prob_bound1} individually. For the first term, we obtain
\begin{equation}\label{eq:prob_bound_term1}
\begin{split}
    &\Prob\left(\max_{1 \leq i \leq N}\abs{\sum_{t=1}^T x_{i,t-1}\epsilon_{y,t}} \geq \frac{a}{2f_T^2}\right)\\
    &\leq \Prob\left(\max_{1 \leq i \leq N}\abs{\sum_{t=1}^T x_{i,t-1}\epsilon_{y,t}} \geq \frac{a}{2f_T^2}, \max_{1 \leq i \leq N}\max_{1 \leq t \leq T}\abs{x_{i,t-1}\epsilon_{y,t}} \leq T^m\right)\\
    &\quad + \Prob\left(\max_{1 \leq i \leq N}\max_{1 \leq t \leq T} \abs{x_{i,t-1}\epsilon_{y,t}} > T^m\right)\\
    &\leq \sum_{i=1}^N\left[\Prob\left(\abs{\sum_{t=1}^T x_{i,t-1}\epsilon_{y,t}} \geq \frac{a}{2f_T^2},\max_{1 \leq t \leq T}\abs{x_{i,t-1}\epsilon_{y,t}} \leq T^m\right) + \Prob\left(\max_{1 \leq t \leq T} \abs{x_{i,t-1}\epsilon_{y,t}} > T^m\right)\right]
\end{split}
\end{equation}
To bound the first term on the RHS of \eqref{eq:prob_bound_term1}, we aim to apply the Azuma-Hoeffding inequality for bounded martingale difference sequences \citep[e.g.][Corollary 2.20]{Wainwright2019} . To this end, let $\mathcal{F}_t$ denote the natural filtration with respect to the stochastic process $\lbrace\bepsilon_j\rbrace_{j=-\infty}^t$. By the proof of Lemma 4 in \citet{KockCallot2015}, it follows that
\begin{equation*}
\E\left(x_{i,t-1}\epsilon_{y,t}\mathbbm{1}\lbrace \abs{x_{i,t-1}\epsilon_{y,t}} \leq T^m\rbrace | \mathcal{F}_{t-1}\right) = 0.
\end{equation*}
Then, by the Azuma-Hoeffding inequality,
\begin{equation}\label{eq:azuma_hoeffding_bound}
\begin{split}
    &\sum_{i=1}^N\Prob\left(\abs{\sum_{t=1}^T x_{i,t-1}\epsilon_{y,t}} \geq \frac{a}{2f_T^2},\max_{1 \leq t \leq T}\abs{x_{i,t-1}\epsilon_{y,t}} \leq T^m\right)\\
    &\leq 2N\text{exp}\left(\frac{-2a^2}{T^{2m+1}f_T^4}\right) = 2N\text{exp}\left(\frac{-2a^2T^{3-2m}}{s^4\log^3 N}\right).
\end{split}
\end{equation}
To bound the second term in \eqref{eq:prob_bound_term1}, we first note that for any two random variables $x,y$ and constant $c>0$, it holds that
\begin{equation*}
    \Prob(\abs{xy} > c) \leq \Prob(\abs{x}c^k > c,\abs{y} \leq c^k) + \Prob(\abs{y} > c^k) \leq \Prob(\abs{x} > c^{1-k}) + \Prob(\abs{y} > c^k).
\end{equation*}
In addition, note that $\bx_{i,t-1} \sim \mathcal{N}(0,t\sigma_{ii}^2)$ and $\epsilon_{y,t} \sim \mathcal{N}(0,\sigma_{11}^2)$ by Assumption \ref{assump:error_distribution}. Then, based on a concentration inequality for sub-Gaussian variables \citep[e.g.][equation (2.9)]{Wainwright2019}, we bound
\begin{equation}\label{eq:sub-gaussian_tail}
\begin{split}
    \Prob\left(\abs{x_{i,t-1}\epsilon_{y,t}} > T^m\right) &\leq \Prob\left(\abs{x_{i,t-1}} > T^{mk}\right) + \Prob\left(\abs{\epsilon_{y,t}} > T^{m(1-k)}\right)\\
    &\leq 2\text{exp}\left(-\frac{T^{2mk-1}}{2\sigma_{ii}^2}\right) + 2\text{exp}\left(-\frac{T^{2m(1-k)}}{2\sigma_{11}^2}\right).
\end{split}
\end{equation}
We optimize the rate at which \eqref{eq:sub-gaussian_tail} converges to zero by equating the exponents:
\begin{equation*}
    2mk-1 = 2m(1-k) \Rightarrow k = \frac{1}{2} + \frac{1}{4m}.
\end{equation*}
Plugging in this value for $k$, together with the upper bound $(\sigma_{ii}^2, \vee \sigma_{11}^2) \leq C_\sigma$ for all $1 \leq i \leq N+1$ and a union bound, it follows that
\begin{equation}\label{eq:max_xepsilon_bound}
    \sum_{i=1}^N\Prob\left(\max_{1 \leq t \leq T}\abs{x_{i,t-1}\epsilon_{y,t}} > T^m\right) \leq 4NT\text{exp}\left(-\frac{T^{m-\frac{1}{2}}}{2C_\sigma}\right).
\end{equation}
Next, again using the sub-Gaussian concentration inequality, we bound the second term of \eqref{eq:prob_bound1} as
\begin{equation}\label{eq:max_epsilons_bound}
\begin{split}
     \Prob\left(\max_{1 \leq i \leq N}\abs{\epsilon_{i,t}\epsilon_{y,t}} \geq \frac{a}{2f_T^2}\right) &\leq \sum_{i=1}^N \left[\Prob\left(\abs{\epsilon_{i,t}} \geq \left(\frac{a}{2f_T^2}\right)^{1/2}\right) + \Prob\left(\abs{\epsilon_{y,t}} \geq \left(\frac{a}{2f_T^2}\right)^{1/2}\right)\right]\\
     &\leq 4N\text{exp}\left(-\frac{a}{4f_T^2C_\sigma}\right) = 4N\text{exp}\left(-\frac{aT^2}{4C_\sigma s^2\log^{3/2} N}\right).
\end{split}
\end{equation}
Finally, combining \eqref{eq:azuma_hoeffding_bound}, \eqref{eq:max_xepsilon_bound} and \eqref{eq:max_epsilons_bound}, we bound the RHS of \eqref{eq:prob_bound1} as
\begin{equation*}
\begin{split}
    &\Prob\left(\max_{1 \leq i \leq N}\abs{\sum_{t=1}^T x_{i,t}\epsilon_{y,t}} \geq \frac{a}{f_T^2}\right)\\
    &\leq  2N\left[\text{exp}\left(-\frac{2a^2T^{3-2m}}{s^4\log^3 N}\right) + 2T\text{exp}\left(-\frac{T^{m-\frac{1}{2}}}{2C_\sigma}\right) + 2\text{exp}\left(-\frac{aT^2}{4C_\sigma s^2\log^{3/2} N}\right)\right].
\end{split}
\end{equation*}
\end{proof}

\subsection{Proofs of main results}

\begin{proof}[\textbf{Proof of Theorem \ref{Thm:REC}}]
We first show that the restricted minimum eigenvalue condition holds for a transformed matrix. Rewrite $\bS = \bU\bE$, where $\bU$ is a lower triangular matrix with ones on and below the diagonal and $\bE = (\bepsilon_1,\ldots,\bepsilon_T)^\prime$. Decompose $\bU^\prime\bU = \bV\bLambda\bV^\prime$, where $\bLambda = \diag(\lambda_1,\ldots,\lambda_T)$ are the eigenvalues of $\bU^\prime\bU$. By Lemma 1 in \citet{Akesson1998}, it holds that
\begin{equation}\label{eq:eig_UU1}
\lambda_t^{-1} = 4\sin^2\left(\frac{\omega_t}{2}\right) = 2(1-\cos \omega_t),
\end{equation}
with $\omega_t = \frac{(2t - 1)\pi}{2T+1}$.\footnote{The second equality in \eqref{eq:eig_UU1} is based on the identity $\cos(2\alpha) = 1 - 2\sin^2(\alpha)$.} Both $\lambda_1 = O(T^2)$ and $\sum_{t=1}^T\lambda_t = O(T^2)$, which turns out to be troublesome when deriving the behaviour of the spectrum. Accordingly, we consider a transformed matrix
\begin{equation*}
\bS^{(\phi)^\prime}\bS^{(\phi)} = \sum_{t=1}^T\lambda_{\phi,t}\tilde{\bepsilon}_t\tilde{\bepsilon}t^\prime,
\end{equation*}
with
\begin{equation*}
\lambda^{-1}_{\phi,t} = 2(1 + \phi - \cos \omega_t),
\end{equation*}
where $\phi$ is a positive sequence that decreases to zero as $T,N,s \to \infty$. Note that
\begin{equation}\label{eq:SS-SS_phi}
\bS^\prime\bS - \bS^{(\phi)^\prime}\bS^{(\phi)} = \sum_{t=1}^T (\lambda_t-\lambda_{\phi,t})\tilde{\bepsilon}_t\tilde{\bepsilon}_t^\prime \succeq 0,
\end{equation}
thus implying that $\kappa\left(\bS,s,c_0\right) > \kappa\left(\bS^{(\phi)},s,c_0\right)$. Define the sparse minimum and maximum eigenvalues as
\begin{equation}\label{eq:restricted_eigs1}
\begin{split}
\phi_\min(\bS^{(\phi_u)},f_T,u) &= \underset{\bx \in \mathbb{R}^N: 1 \leq \mathcal{M}(\bx) \leq u}{\text{min}} \frac{\norm{f_T\bS^{(\phi_u)}\bx}_2^2}{\norm{\bx}_2^2},\\
\phi_\max(\bS^{(\phi_u)},f_T,u) &= \underset{\bx \in \mathbb{R}^N: 1 \leq \mathcal{M}(\bx) \leq u}{\text{max}} \frac{\norm{f_T\bS^{(\phi_u)}\bx}_2^2}{\norm{\bx}_2^2}.
\end{split}
\end{equation}
By Lemma 4.1(ii) in \citet{Bickel2009}, it follows that
\begin{equation}\label{eq:Bickel_ineq}
    \kappa\left(\bS^{(\phi_s)},f_T,s,c_0\right) \geq 
    \sqrt{\phi_\min(\bS^{(\phi_s)},f_T,s+m)} - c_0\sqrt{\phi_\max(\bS^{(\phi_s)},f_T,m)}\sqrt{\frac{s}{m}},
\end{equation}
where $m \geq s$ and $s + m \leq N$. Accordingly, we aim the provide a lower bound for \eqref{eq:Bickel_ineq} by deriving probabilistic upper and lower bounds for the sparse minimum and maximum eigenvalues, respectively.

Let the collection of all index sets of cardinality $s$ be denoted by
\begin{equation*}
\mathcal{A}(s) := \left\lbrace A \subseteq \lbrace 1, \ldots, N\rbrace :  \abs{A} = s\right\rbrace,
\end{equation*}
and note that its cardinality is upper bounded by
\begin{equation}\label{eq:A_card}
\abs{\mathcal{A}(s)} \leq \left(\frac{eN}{s}\right)^s = e^{s\log(eN/s)} \leq e^{s(1 - \log s +\log N)} \leq e^{s\log N},
\end{equation}
for $s \geq 3$. For any $A \in \mathcal{A}(s)$, let $\bS^{(\phi)}_A = \bU^{(\phi)}\bE_A$ represent the columns of $\bS^{(\phi)}$ indexed by $A$. Then, an equivalent definition of the sparse eigenvalues in \eqref{eq:restricted_eigs1} is given by
\begin{equation}\label{eq:restricted_eigs2}
\begin{split}
&\phi_\min(\bS^{(\phi_s)},f_T,s) = \underset{A \in \mathcal{A}(s)}{\text{min}}\lambda_\min\left(f_T^2\bS_A^{(\phi_s)\prime}\bS_A^{(\phi_s)}\right),\\
&\phi_\max(\bS^{(\phi_s)},f_T,s) = \underset{A \in \mathcal{A}(s)}{\text{max}}\lambda_\max\left(f_T^2\bS_A^{(\phi_s)\prime}\bS_A^{(\phi_s)}\right).
\end{split}
\end{equation}

We proceed to derive the stochastic order of the quantities in \eqref{eq:restricted_eigs2}. For any $A \in \mathcal{A}(s)$ it follows that
\begin{equation*}
\bS_A^{(\phi_s)\prime}\bS_A^{(\phi_s)} = \bE_A^\prime\bV\bLambda^{(\phi_s)}\bV^\prime\bE_A = \tilde{\bE}_A^\prime\bLambda^{(\phi_s)}\tilde{\bE}_A = \sum_{t=1}^T\lambda_{\phi_s,t}\tilde{\bepsilon}_{A,t}\tilde{\bepsilon}_{A,t}^\prime,
\end{equation*}
where $\tilde{\bepsilon}_{A,t} \overset{i.i.d.}{\sim} \mathcal{N}(\bm{0},\bSigma_{A})$ by the rotational invariance of the multivariate normal distribution. Let
\begin{equation*}
\begin{split}
    \mu^{\phi_s}_\min(A) &:= \lambda_\min\left(\sum_{t=1}^T \lambda_{\phi_s,t}\E\left(\tilde{\bepsilon}_{A,t}\tilde{\bepsilon}_{A,t}^\prime\right)\right) = \lambda_\min\left(\bSigma_A\right)\sum_{t=1}^T\lambda_{\phi_s,t}\\
    \mu^{\phi_s}_\max(A) &:= \lambda_\max\left(\sum_{t=1}^T \lambda_{\phi_s,t}\E\left(\tilde{\bepsilon}_{A,t}\tilde{\bepsilon}_{A,t}^\prime\right)\right) = \lambda_\max\left(\bSigma_A\right)\sum_{t=1}^T\lambda_{\phi_s,t},
\end{split}
\end{equation*}
and the corresponding quantities
\begin{equation*}
    \begin{split}
        \mu^s_\min &= \min_{A \in \mathcal{A}(s)} \mu^{\phi_s}_\min (A),\\
        \mu^s_\max &= \max_{A \in \mathcal{A}(s)} \mu^{\phi_s}_\max (A).
    \end{split}
\end{equation*}
In addition, define 
\begin{equation}\label{eq:R}
R_s = C_\sigma^2\lambda_{\phi_s,1}s\left((C_1 + 1)\log^{1/2} N+1\right),
\end{equation}
for an arbitrary constant $C_1 > 0$. First, we use the truncation argument
\begin{equation}\label{eq:min_eig_bound_1}
\begin{split}
&\Prob\left(\min_{A \in \mathcal{A}(s)} \lambda_\min\left(\frac{1}{\mu_\min(A)}\bS_A^{(\phi_s)^\prime}\bS_A^{(\phi_s)}\right) \leq 1-\delta\right)\\
& \leq \Prob\left(\min_{A \in \mathcal{A}(s)} \lambda_\min\left(\frac{1}{\mu_\min(A)}\bS_A^{(\phi_s)^\prime}\bS_A^{(\phi_s)}\right) \leq 1-\delta,\sup_{A \in \mathcal{A}(s)} \sup_{1\leq t\leq T}\lambda_\max\left(\lambda_{\phi_s,t}\tilde{\bepsilon}_{A,t}\tilde{\bepsilon}_{A,t}^\prime\right)\leq R_s \right)\\
&\quad + \Prob\left(\sup_{A \in \mathcal{A}(s)} \sup_{1\leq t\leq T}\lambda_\max\left(\lambda_{\phi_s,t}\tilde{\bepsilon}_{A,t}\tilde{\bepsilon}_{A,t}^\prime\right) \geq R_s \right).\\
\end{split}
\end{equation}

The first term on the RHS can now be bounded with the use of the Chernoff bound in Lemma \ref{Lemma:Chernoff}. Define the constant
\begin{equation*}
K_\delta = \max\left(\frac{e^{-\delta}}{(1-\delta)^{1-\delta}}, \frac{e^{\delta}}{(1+\delta)^{1+\delta}}\right)
\end{equation*}
and note that $0 < K_\delta < 1$ for any $0 < \delta < 1$. Then, Lemma \ref{Lemma:Chernoff} states that, for each $A \in \mathcal{A}(s)$,
\begin{equation}\label{eq:Chernoff2}
\begin{split}
\Prob\left(\lambda_\min\left(\frac{1}{\mu_\min(A)}\bS_A^{(\phi_s)^\prime}\bS_A^{(\phi_s)}\right) \leq 1-\delta, \sup_{1\leq t\leq T}\lambda_\max\left(\lambda_{\phi_s,t}\tilde{\bepsilon}_{A,t}\tilde{\bepsilon}_{A,t}^\prime\right)\leq R_s \right) \leq s K_\delta^{\mu^{\phi_s}_\min(A)/R_s}.
\end{split}
\end{equation}
For any $s \leq N$, we can bound $\mu^s_\min$ from below as
\begin{equation}\label{eq:eig_mu}
\begin{split}
2\mu^s_\min &\geq \sum_{t=1}^T \frac{c_\sigma}{1-\cos \omega_t + \phi_s} \geq \sum_{t=1}^T \frac{c_\sigma}{\left(\frac{(2t+1)\pi}{2T+1}\right)^2 + \phi_s}\\
&= \sum_{t=1}^T \frac{c_\sigma(2T+1)^2}{(2t+1)^2\pi^2 + \phi_s(2T+1)^2}.
\end{split}
\end{equation}
Since $\phi_s$ is defined as a positive sequence decreasing to zero in $T$, we have that $\phi_s < 1$ for large enough $T$. It follows that
\begin{equation}\label{eq:eig_mu2}
\begin{split}
\mu^s_\min &\geq \sum_{t=1}^T \frac{c_\sigma(2T+1)^2}{2\left((2t+1)^2\pi^2 + \phi_s(2T+1)^2\right)} \geq \sum_{t=1}^T \frac{c_\sigma(2T)^2}{2\left((3t)^2\pi^2 + \phi_s(3T)^2\pi^2\right)}\\
&=\sum_{t=1}^T \frac{4c_\sigma T^2}{18\pi^2\left(t^2 + \phi_sT^2\right)} \geq \sum_{t=1}^{\left[T\phi_s^{1/2}\right]} \frac{4c_\sigma T^2}{18\pi^2\left(t^2 + \phi_sT^2\right)}\\
&\geq \frac{4c_\sigma \phi_s^{1/2}T^3}{36\pi^2\phi_sT^2} = \frac{c_\sigma T}{9\pi^2\phi_s^{1/2}}.
\end{split}
\end{equation}
Next, bound $\mu^s_\min/R_s$ as
\begin{equation}\label{eq:mu/r}
\begin{split}
\frac{\mu^s_\min}{R_s} &\geq \frac{c_\sigma T}{9\pi^2C_\sigma^2\phi_s^{1/2}\lambda_{\phi,1}s\left((C_1 + 1)\log^{1/2} N+1\right)} = \frac{c_\sigma T(1-\cos \omega_1 + \phi_s)}{9\pi^2C_\sigma^2\phi_s^{1/2}s\left((C_1 + 1)\log^{1/2} N+1\right)}\\
&\overset{(*)}{\geq} \frac{c_\sigma T\phi_s^{1/2}}{9\pi^2C_\sigma^2(C_1 + 3)s\log^{1/2} N} \overset{(**)}{\geq} -\frac{(C_1+1)s\log N}{\log K_\delta},
\end{split}
\end{equation}
where (*) follows from the observation that $2\log^{1/2} N > 1$ for any $N \geq 2$ and (**) follows after setting
\begin{equation}\label{eq:phi}
  \phi_s^{1/2} = -\frac{9\pi^2(C_1+1)(C_1+3)C_\sigma^2 s^2\log^{3/2} N}{c_\sigma T\log K_\delta}. 
\end{equation}
Note that $\phi_s^{1/2}$ is positive due to the negative term $\log K_\delta$ in the denominator. Then, combining \eqref{eq:eig_mu2}-\eqref{eq:phi}, we use the Chernoff matrix-bound from Lemma \ref{Lemma:Chernoff} to bound the first RHS term of \eqref{eq:min_eig_bound_1} as
\begin{equation}\label{eq:Chernoff_upper_bound}
\begin{split}
    &\Prob\left(\min_{A \in \mathcal{A}(s)} \lambda_\min\left(\frac{1}{\mu^{\phi_s}_\min(A)}\bS_A^{(\phi)^\prime}\bS_A^{(\phi)}\right) \leq 1-\delta,\sup_{A \in \mathcal{A}(s)} \sup_{1\leq t\leq T}\lambda_\max\left(\lambda_{\phi_s,t}\tilde{\bepsilon}_t\tilde{\bepsilon}_t^\prime\right)\leq R_s \right)\\
    &\leq \sum_{A \in \mathcal{A}(s)}\Prob\left(\min_{A \in \mathcal{A}(s)} \lambda_\min\left(\frac{1}{\mu^{\phi_s}_\min(A)}\bS_A^{(\phi)^\prime}\bS_A^{(\phi)}\right) \leq 1-\delta,\sup_{A \in \mathcal{A}(s)} \sup_{1\leq t\leq T}\lambda_\max\left(\lambda_{\phi_s,t}\tilde{\bepsilon}_t\tilde{\bepsilon}_t^\prime\right)\leq R_s \right)\\
    &\leq s\cdot\text{exp}\left(s \log N + \frac{\mu^s_\min \log K_\delta}{R_s}\right) \leq s\cdot\text{exp}\left(-C_1 s \log N\right).
\end{split}
\end{equation}

Next, we proceed by bounding the second RHS term of \eqref{eq:min_eig_bound_1}. Let $\bv_t \sim  \mathcal{N}\left(0,\bI_{\abs{A}}\right)$, such that $\tilde{\bepsilon}_{A,t} \overset{d}{=} \bSigma_A^{1/2}\bv_t$. By application of the union bound, the fact that the $\tilde{\bepsilon}_t$ are identically distributed, the definition of $R_s$ and the triangle inequality, it holds that
\begin{equation}\label{eq:r}
\begin{split}
&\Prob\left(\sup_{A \in \mathcal{A}(s)}\sup_{1\leq t \leq T} \lambda_\max\left(\lambda_{\phi_s,t}\tilde{\bepsilon}_{A,t}\tilde{\bepsilon}_{A,t}^\prime\right) \geq R_s\right) \leq \sum_{A \in \mathcal{A}(s)}\sum_{t=1}^T \Prob\left(\lambda_\max\left(\tilde{\bepsilon}_{A,t}\tilde{\bepsilon}_{A,t}^\prime\right) \geq \frac{R_s}{\lambda_{\phi_s,t}}\right)\\
&\leq T\sum_{A \in \mathcal{A}(s)}\Prob\left(\lambda_\max\left(\tilde{\bepsilon}_{A,1}\tilde{\bepsilon}_{A,1}^\prime\right) \geq \frac{R_s}{\lambda_{\phi_s,1}}\right) = T\sum_{A \in \mathcal{A}(s)}\Prob\left(\norm{\tilde{\bepsilon}_{A,1}}_2^2 \geq \frac{R_s}{\lambda_{\phi_s,1}}\right)\\
&\leq T\sum_{A \in \mathcal{A}(s)}\Prob\left(\norm{\tilde{\bv}_1}_2^2 \geq \frac{R_s}{\lambda_{\phi_s,1}\norm{\bSigma_A}_2^2}\right) \leq T\sum_{A \in \mathcal{A}(s)}\Prob\left(\frac{1}{s}\sum_{j \in A}v^2_{j,1} \geq (C_1 + 1)\log^{1/2} N+1\right)\\
& \leq T\sum_{A \in \mathcal{A}(s)}\Prob\left(\abs{\frac{1}{s}\sum_{j \in A}v^2_{j,1} - 1} \geq (C_1 + 1)\log^{1/2} N\right) \overset{(*)}{\leq} 2T\exp\left(-C_1s\log N\right),
\end{split}
\end{equation}
where (*) follows by the sub-exponential concentration of property of  $\chi^2$ random variables (see Example 2.11 in \citet{Wainwright2019}). Finally, plugging \eqref{eq:Chernoff_upper_bound} and \eqref{eq:r} into \eqref{eq:min_eig_bound_1}, we obtain the bound for the minimum sparse eigenvalue as
\begin{equation}\label{eq:min_eig_bound}
\begin{split}
&\Prob\left(\phi_\min\left(\bS^{(\phi_s)},\frac{1}{\sqrt{\mu^s_\min}},s\right) \leq 1-\delta\right) \leq \Prob\left(\min_{A \in \mathcal{A}(s)} \lambda_\min\left(\frac{1}{\mu^{\phi_s}_\min(A)}\bS_A^{(\phi_s)^\prime}\bS_A^{(\phi_s)}\right) \leq 1-\delta\right)\\
&\leq (2T+s)\text{exp}\left(-C_1 s \log N\right).
\end{split}
\end{equation}
Finally, since the lower bounds in \eqref{eq:eig_mu2} and \eqref{eq:mu/r} remain valid for $\mu_\max$ and $\frac{\mu_\max}{R_s}$, respectively, it follows by analogous reasoning that the maximum sparse eigenvalue is bounded as
\begin{equation}\label{eq:max_eig_bound}
\begin{split}
&\Prob\left(\phi^2_\max\left(\bS^{(\phi_s)},\frac{1}{\sqrt{\mu^s_\max}},s\right) \geq 1+\delta\right) \leq \Prob\left(\max_{A \in \mathcal{A}(s)} \lambda_\max\left(\frac{1}{\mu^{\phi_s}_\max(A)}\bS_A^{(\phi_s)^\prime}\bS_A^{(\phi_s)}\right) \geq 1+\delta\right)\\
&\leq  (2T+s)\text{exp}\left(-C_1 s \log N\right).
\end{split}
\end{equation}

Having derived probabilistic bounds for the minimum and maximum sparse eigenvalues, we proceed to prove the restricted eigenvalue condition in Theorem \ref{Thm:REC}. Define the set
\begin{equation*}
    \mathcal{R}(s,m) = \left\lbrace \phi_\min\left(\bS^{(\phi_{s+m})},\frac{1}{\sqrt{\mu^{s+m}_\min}},s+m\right) > 1-\delta\right\rbrace \bigcap \left\lbrace\phi_\max\left(\bS^{(\phi_s)},\frac{1}{\sqrt{\mu^s_\max}},s\right) < 1+\delta\right\rbrace.
\end{equation*}
For a positive constant $\kappa > 0$, define 
\begin{equation*}
    m_\kappa = \left\lceil\frac{c_0^2C_\sigma(1+\delta)s}{c_\sigma(\sqrt{1-\delta} - \kappa)^2}\right\rceil =: \left\lceil C_\kappa s\right\rceil,
\end{equation*}
with $C_\kappa = \frac{c_0^2C_\sigma(1+\delta)}{c_\sigma(\sqrt{1-\delta} - \kappa)^2}$. Then, continuing from \eqref{eq:Bickel_ineq}, it holds on $\mathcal{R}\left(s,m_\kappa\right)$ that
\begin{equation}
\begin{split}
        &\kappa\left(\bS^{(\phi_{s})},\frac{1}{\sqrt{\mu^{s+m_\kappa}_\min}},s,c_0\right)\\
        &\geq \sqrt{\phi_\min\left(\bS^{(\phi_s)},\frac{1}{\sqrt{\mu^{s+m_\kappa}_\min}},s+m_\kappa\right)} - c_0\sqrt{\phi_\max\left(\bS^{(\phi_s)},\frac{1}{\sqrt{\mu^{s+m_\kappa}_\min}},s\right)}\sqrt{\frac{s}{m_\kappa}}\\
        &\overset{(*)}{\geq} \sqrt{\phi_\min\left(\bS^{(\phi_{s + m_\kappa})},\frac{1}{\sqrt{\mu^{s+m_\kappa}_\min}},s+m_\kappa\right)} - c_0\sqrt{\phi_\max\left(\bS^{(\phi_s)},\frac{1}{\sqrt{\mu^{s+m_\kappa}_\min}},s\right)}\sqrt{\frac{s}{m_\kappa}}\\
        &\geq \sqrt{\phi_\min\left(\bS^{(\phi_{s + m_\kappa})},\frac{1}{\sqrt{\mu^{s+m_\kappa}_\min}},s+m_\kappa\right)} - c_0\sqrt{\frac{\mu^s_\max}{\mu^{s+m_\kappa}_\min}}\sqrt{\phi_\max\left(\bS^{(\phi_s)},\frac{1}{\sqrt{\mu^s_\max}},s\right)}\sqrt{\frac{s}{m_\kappa}}\\
        &\overset{(**)}{\geq} \sqrt{1-\delta} - c_0\sqrt{\frac{C_\sigma(1+\delta)s}{c_\sigma m_\kappa}} \geq \kappa,
\end{split}
\end{equation}
where ($*$) uses that $\lambda_{\phi_s,t} > \lambda_{\phi_{s+m},t}$ and (**) uses that
\begin{equation*}
    \frac{\mu^s_\max}{\mu^{s+m_\kappa}_\min} \leq \frac{\mu^{s+m_\kappa}_\max}{\mu^{s+m_\kappa}_\min} \leq \frac{C_\sigma}{c_\sigma}.
\end{equation*} 
Note that by \eqref{eq:eig_mu2} and \eqref{eq:phi}
\begin{equation*}
    \frac{1}{\mu^{s+m_\kappa}_\min} \leq \frac{9\pi^2\phi_{s+m_\kappa}^{1/2}}{c_\sigma T} = -\frac{81\pi^4(C_1 + 1)(C_1 + 3)C_\sigma^2 (C_\kappa+1)^2s^2 \log^{3/2} N}{c_\sigma^2 T^2 \log K_\delta} = \frac{C_\mu s^2 \log^{3/2} N}{T^2},
\end{equation*}
with $C_\mu = -\frac{81\pi^4(C_1 + 1)(C_1 + 3)C_\sigma^2 (C_\kappa + 1)^2}{c_\sigma^2\log K_\delta}$. Then, we conclude that
\begin{equation}\label{eq:REC_phi}
    \kappa\left(\bS,\frac{s\log^{3/4}N}{T},s,c_0\right) \geq C_\mu^{-1/2}\kappa\left(\bS^{(\phi_s)},\frac{1}{\sqrt{\mu^{s+m_\kappa}_\min}},s,c_0\right) \geq C_\mu^{-1/2}\kappa =: \kappa_0
\end{equation}
on the set $\mathcal{R}\left(s,m_\kappa\right)$, where the fully expanded expression for $\kappa_0$ reads
\begin{equation}\label{eq:k_0}
    \kappa_0 := \kappa_0(c_0,c_\sigma,C_\sigma,C_1) = -\frac{\kappa c_\sigma\log^{1/2} K_\delta}{9\pi^2\sqrt{(C_1 + 1)(C_1 + 3)}C_\sigma \left(\frac{c_0^2C_\sigma(1+\delta)}{c_\sigma(\sqrt{1-\delta} - \kappa)^2} + 1\right)},
\end{equation}
where $\kappa > 0$ and $0 < \delta < 1$ are free constants chosen such that $\left\lceil\frac{c_0^2C_\sigma(1+\delta)s}{c_\sigma(\sqrt{1-\delta} - \kappa)^2}\right\rceil < N$.

Finally, based on \eqref{eq:min_eig_bound} and \eqref{eq:max_eig_bound},
\begin{equation*}
\begin{split}
    \Prob\left(\mathcal{R}(s,m_\kappa)\right) &\geq 1 -  (2T + s)\text{e}^{-C_1s \log N}  - (2T + m_\kappa)\text{e}^{-C_1m_\kappa \log N}\\
    &\geq 1 - (4T + (C_k + 1)s)\text{e}^{-C_1s\log N} = 1 - (4T + C_2s)\text{e}^{-C_1s\log N},
\end{split}
\end{equation*}
with $C_2 = C_\kappa + 1$. This completes the proof of Theorem \ref{Thm:REC}.

\end{proof}
\begin{proof}[\textbf{Proof of Theorem \ref{Thm:error_bounds_lasso}}]
The proof of the error bound is based on standard arguments, see for example Theorem 6.1 in \citet{Buhlmann2011}, but the probability on which it holds requires new arguments. First, let $f_T = \frac{s\log^{3/4} N}{T}$ and define $\tilde{\lambda}_T = f_T^2\lambda_T$, such that the estimator can equivalently be defined as the minimizer of
\begin{equation}\label{eq:obj_fun}
    \mathcal{L}(\bbeta) = f_T^2\norm{\by - \bX\bbeta}_2^2 + \tilde{\lambda}_T\norm{\bbeta}_1.
\end{equation}
Next, define the set 
\begin{equation}\label{eq:empirical_process}
  \mathcal{A}_T\left(\tilde{\lambda}_2\right) = \left\lbrace f_T^2\norm{\bX^\prime\bepsilon_y}_\infty \leq \frac{\tilde{\lambda}_T}{4} \right\rbrace.
\end{equation}
By construction $\mathcal{L}\left(\hat{\bbeta}\right) \leq \mathcal{L}\left(\bbeta\right)$. Then, on the set $\mathcal{A}_T$, it holds that
\begin{equation}\label{eq:bound_step1}
    \begin{split}
    &f_T^2\norm{\by - \bX\hat{\bbeta}}_2^2 + \tilde{\lambda}_T\norm{\hat{\bbeta}}_1 \leq f_T^2\norm{\by - \bX\bbeta}_2^2 + \tilde{\lambda}_T\norm{\bbeta}_1\\
    &\begin{split}\Rightarrow f_T^2\norm{\bX(\hat{\bbeta} - \bbeta)}_2^2 + \tilde{\lambda}_T\norm{\hat{\bbeta}}_1  &\leq 2f_T^2\norm{\bX^\prime\bepsilon_y}_\infty\norm{\hat{\bbeta}- \bbeta}_1 + \tilde{\lambda}_T\norm{\bbeta_S}_1\\
    &\leq \frac{\tilde{\lambda}_T}{2}\norm{\hat{\bbeta}- \bbeta}_1 + \tilde{\lambda}_T\norm{\bbeta_S}_1\end{split}
    \end{split}.
\end{equation}
Using that 
\begin{equation*}
    \norm{\hat{\bbeta}}_1 = \norm{\hat{\bbeta}_S}_1 + \norm{\hat{\bbeta}_S{^c}}_1 \geq \norm{\hat{\bbeta}_S}_1 - \norm{\hat{\bbeta}_S - \bbeta_S}_1 + \norm{\hat{\bbeta}_{S^c}}_1,
\end{equation*}
it follows from \eqref{eq:bound_step1} that
\begin{equation}\label{eq:beta_cone}
\begin{split}
    2\tilde{\lambda}_T\norm{\hat{\bbeta}}_1 \leq \tilde{\lambda}_T\norm{\hat{\bbeta}- \bbeta}_1 + 2\tilde{\lambda}_T\norm{\bbeta_S}_1 \Rightarrow \norm{\hat{\bbeta}_{S^c}}_1 \leq 3\norm{\hat{\bbeta}_S - \bbeta_S}_1.
\end{split}
\end{equation}
Define the cone
\begin{equation}\label{eq:cone}
    \mathcal{C}_N(S,c_0) := \left\lbrace \bx \in \mathbb{R}^N : \norm{\bx_{S^c}}_1 \leq c_0\norm{\bx_S}_1 \right\rbrace
\end{equation}
and note that \eqref{eq:beta_cone} implies that $\hat{\bbeta} - \bbeta \in \mathcal{C}_N(S,3)$. Recall the definition of $\kappa\left(\hat{\bSigma}_T,s,c_0\right)$ in \eqref{eq:restricted_eigenvalue} and define the set
\begin{equation}\label{eq:set_cone}
    \mathcal{B}_T = \left( \kappa\left(\hat{\bSigma}_T,s,c_0\right) \geq \phi_0 \right\rbrace
\end{equation}
assume that $\kappa\left(\hat{\bSigma}_T,s,3\right) \geq \phi_0^2$. Then, combining \eqref{eq:bound_step1} and \eqref{eq:beta_cone},
\begin{equation}\label{eq:bound_step2}
\begin{split}
    &2\norm{f_T\bX(\hat{\bbeta}-\bbeta)}_2^2 + \tilde{\lambda}_T\norm{\hat{\bbeta} - \bbeta}_1 \overset{\eqref{eq:bound_step1}}{\leq} 4\tilde{\lambda}_T\norm{\hat{\bbeta}_S - \bbeta_S}_1 \leq 4\sqrt{s}\tilde{\lambda}_T\norm{\hat{\bbeta}_S - \bbeta_S}_2\\
    &\leq \frac{4\tilde{\lambda}_T\sqrt{s}\norm{f_T\bX\left(\hat{\bbeta}-\bbeta\right)}_2}{\phi_0} \leq \frac{4\tilde{\lambda}_T^2s}{\phi_0^2} + \norm{f_T\bX\left(\hat{\bbeta}-\bbeta\right)}_2^2,
\end{split}
\end{equation}
with the last inequality following from the elementary inequality $2uv \leq u^2 + v^2$. Hence, we are left to conclude that, on $\mathcal{A}_T\left(\tilde{\lambda}_T\right) \cap \mathcal{B}_T$,
\begin{equation*}
\begin{split}
    &\norm{f_T\bX(\hat{\bbeta}-\bbeta)}_2^2 + \tilde{\lambda}_T\norm{\hat{\bbeta} - \bbeta}_1 \leq \frac{4\tilde{\lambda}_T^2s}{\phi_0^2}\\
    &\Leftrightarrow \norm{\bX(\hat{\bbeta}-\bbeta)}_2^2 + \lambda_T\norm{\hat{\bbeta} - \bbeta}_1 \leq \frac{4f_T^2\lambda_T^2s}{\phi_0^2} = \frac{4\lambda_T^2s^3\log^{3/2} N}{T^2\phi_0^2}
\end{split}
\end{equation*}
Finally, we note that $\Prob\left(\mathcal{A}_T\left(\tilde{\lambda}_T\right) \cap \mathcal{B}_T\right) \leq 1 - \Prob\left(\mathcal{A}_T\left(\tilde{\lambda}_T\right)^c\right) - \Prob\left(\mathcal{B}_T^c\right)$. By Lemma \ref{Lemma:emp_proc}, it holds that
\begin{equation*}
\begin{split}
    &\Prob\left(\mathcal{A}_T^c\right) \leq 2N\left[\text{exp}\left(-\frac{2\tilde{\lambda}_T^2T^{3-2m}}{s^4\log^3 N}\right) + 2T\text{exp}\left(-\frac{T^{m-\frac{1}{2}}}{2C_\sigma}\right) + 2\text{exp}\left(-\frac{\tilde{\lambda}_T T^2}{4C_\sigma s^2\log^{3/2} N}\right)\right]\\
    &= 2N\left[\text{exp}\left(-\frac{2\lambda_T^2}{T^{1+2m}}\right) + 2T\text{exp}\left(-\frac{T^{m-\frac{1}{2}}}{2C_\sigma}\right) + 2\text{exp}\left(-\frac{\lambda_T}{4C_\sigma}\right)\right],
\end{split}
\end{equation*}
whereas it follows from Theorem \ref{Thm:REC} that
\begin{equation*}
    \Prob\left(\mathcal{B}_T^c\right) \leq (4T + C_2s)\text{e}^{-C_1s\log N}.
\end{equation*}
Combining the latter two inequalities completes the proof.
\end{proof}

\begin{proof}[\textbf{Proof of Corollary \ref{Cor:lasso_consistency}}]
We show that under the the rate restrictions in Corollary \ref{Cor:lasso_consistency}, and Assumption \ref{assump:error_distribution}, the error bound in Theorem \ref{Thm:error_bounds_lasso} holds with probability converging to 1 as $T,N,s \to \infty$. This requires us to show that
\begin{equation*}
\begin{split}
     &2N\left[\text{exp}\left(-\frac{2\lambda_T^2}{T^{1+2m}}\right) + 2T\text{exp}\left(-\frac{T^{m-\frac{1}{2}}}{2C_\sigma}\right) + 2\text{exp}\left(-\frac{\lambda_T}{4C_\sigma}\right)\right] +z (4T + C_2s)e^{-C_1s\log N} \\
     &= A_1 + A_2 + A_3 + A_4 \to 0,
\end{split}
\end{equation*}
as $T,N,s \to \infty$. For $A_1$, we set $m = 1/2+\xi$, and note that
\begin{equation*}
    A_1 = \text{exp}\left(-\frac{2\lambda_T^2}{T^{2+2\xi}} + \log N\right) \to 0
\end{equation*}
if $\frac{T^{1+\xi}\log^{1/2} N}{\lambda_T} \to 0$ as $T,N,s \to \infty$. Next, note that
\begin{equation*}
    A_2 = \text{exp}\left(-\frac{T^{\xi}}{2C_\sigma} + \log N + \log 2T\right) \to 0,
\end{equation*}
if $\frac{\log N}{T^\xi} \to 0$. Moving to $A_3$, it holds that
\begin{equation*}
    A_3 = 2\text{exp}\left(-\frac{\lambda_T}{4C_\sigma } + \log N\right) \to 0,
\end{equation*}
if $\frac{\log N}{\lambda_T } \to 0$ as $T,N,s \to \infty$, which is also implied by the combination of conditions that $\frac{T^{1+\xi}\log^{1/2} N}{\lambda_T} \to 0$ and $\frac{\log N}{T^\xi} \to 0$ as $T,N,s \to \infty$. Finally, it holds that
\begin{equation*}
    A_4 = (4T + C_2s)\text{e}^{-C_1s\log N} = 4\text{exp}\left(-C_1s\log N + \log T\right) + C_2\text{exp}\left(-C_1s\log N + \log s\right) \to 0,
\end{equation*}
if $\frac{s\log N}{\log T} \to \infty$ and $\frac{s\log N}{\log s} \to \infty$, as $T,N,s \to \infty$. The latter condition holds trivially, while the former condition is implied by the condition $\frac{\log T}{s \log N} \to 0$ as $T,N,s \to \infty$. As a result, it follows that $\mathcal{P}(T,N,s) \to 1$ as $T,N,s \to \infty$, with $\mathcal{P}(T,N,s)$ being defined in \eqref{eq:probability_error_bound}, thereby completing the proof.
\end{proof}
\end{small}
\end{appendices}

\newpage
\clearpage
\label{Bibliography}
\bibliographystyle{apalike}
\bibliography{Bibliography}

\begin{thebibliography}{}

\bibitem[Akesson and Lehoczky, 1998]{Akesson1998}
Akesson, F. and Lehoczky, J. (1998).
\newblock Discrete eigenfunction expansion of multidimensional {B}rownian
  motion and the {Ornstein-Uhlenbeck} process.
\newblock Preprint.

\bibitem[Basu and Michailidis, 2015]{Basu2015}
Basu, S. and Michailidis, G. (2015).
\newblock Regularized estimation in sparse high-dimensional time series models.
\newblock {\em The Annals of Statistics}, 43:1535--1567.

\bibitem[Bickel et~al., 2009]{Bickel2009}
Bickel, P.~J., Ritov, Y., and Tsybakov, A.~B. (2009).
\newblock Simultaneous analysis of lasso and dantzig selector.
\newblock {\em The Annals of Statistics}, 37:1705--1732.

\bibitem[B{\"u}hlmann and Van De~Geer, 2011]{Buhlmann2011}
B{\"u}hlmann, P. and Van De~Geer, S. (2011).
\newblock {\em Statistics for High-Dimensional Data: Methods, Theory and
  Applications}.
\newblock Springer Science \& Business Media.

\bibitem[Kasiviswanathan and Rudelson, 2018]{Kasiviswanathan2018}
Kasiviswanathan, S.~P. and Rudelson, M. (2018).
\newblock Restricted eigenvalue from stable rank with applications to sparse
  linear regression.
\newblock In {\em Conference On Learning Theory}, pages 1011--1041. PMLR.

\bibitem[Kock and Callot, 2015]{KockCallot2015}
Kock, A.~B. and Callot, L. (2015).
\newblock Oracle inequalities for high dimensional vector autoregressions.
\newblock {\em Journal of Econometrics}, 186:325--344.

\bibitem[Koo et~al., 2020]{Koo2020}
Koo, B., Anderson, H.~M., Seo, M.~H., and Yao, W. (2020).
\newblock High-dimensional predictive regression in the presence of
  cointegration.
\newblock {\em Journal of Econometrics}, 219:456--477.

\bibitem[Lee et~al., 2021]{Lee2021}
Lee, J.~H., Shi, Z., and Gao, Z. (2021).
\newblock On lasso for predictive regression.
\newblock {\em Journal of Econometrics}.

\bibitem[Liang and Schienle, 2019]{Liang2019}
Liang, C. and Schienle, M. (2019).
\newblock Determination of vector error correction models in high dimensions.
\newblock {\em Journal of Econometrics}, 208(2):418--441.

\bibitem[Liao and Phillips, 2015]{Liao2015}
Liao, Z. and Phillips, P. C.~B. (2015).
\newblock Automated estimation of vector error correction models.
\newblock {\em Econometric Theory}, 31:581--646.

\bibitem[Masini et~al., 2019]{Masini2019}
Masini, R.~P., Medeiros, M.~C., and Mendes, E.~F. (2019).
\newblock Regularized estimation of high-dimensional vector autoregressions
  with weakly dependent innovations.
\newblock arXiv e-prints 1912.09002, arXive.

\bibitem[Medeiros and Mendes, 2016]{Medeiros2016}
Medeiros, M.~C. and Mendes, E.~F. (2016).
\newblock $\ell_1$-regularization of high-dimensional time series models with
  non-gaussian and heteroskedastic errors.
\newblock {\em Journal of Econometrics}, 191:255--271.

\bibitem[Raskutti et~al., 2010]{Raskutti2010}
Raskutti, G., Wainwright, M.~J., and Yu, B. (2010).
\newblock Restricted eigenvalue properties for correlated gaussian designs.
\newblock {\em The Journal of Machine Learning Research}, 11:2241--2259.

\bibitem[Rudelson and Zhou, 2012]{Rudelson2012}
Rudelson, M. and Zhou, S. (2012).
\newblock Reconstruction from anisotropic random measurements.
\newblock In {\em Conference on Learning Theory}, pages 10--1. JMLR Workshop
  and Conference Proceedings.

\bibitem[Smeekes and Wijler, 2020]{Smeekes2020a}
Smeekes, S. and Wijler, E. (2020).
\newblock An automated approach towards sparse single-equation cointegration
  modelling.
\newblock {\em Journal of Econometrics}.

\bibitem[Tropp, 2012]{Tropp2012}
Tropp, J.~A. (2012).
\newblock User-friendly tail bounds for sums of random matrices.
\newblock {\em Foundations of Computational Mathematics}, 12:389--434.

\bibitem[Tropp, 2015]{Tropp2015b}
Tropp, J.~A. (2015).
\newblock Convex recovery of a structured signal from independent random linear
  measurements.
\newblock In {\em Sampling Theory, a Renaissance}, pages 67--101. Springer.

\bibitem[Wainwright, 2019]{Wainwright2019}
Wainwright, M.~J. (2019).
\newblock {\em High-dimensional statistics: A non-asymptotic viewpoint},
  volume~48.
\newblock Cambridge University Press.

\bibitem[Wong et~al., 2020]{Wong2020}
Wong, K.~C., Li, Z., and Tewari, A. (2020).
\newblock Lasso guarantees for $\beta$-mixing heavy-tailed time series.
\newblock {\em Annals of Statistics}, 48:1124--1142.

\bibitem[Zhou, 2009]{Zhou2009}
Zhou, S. (2009).
\newblock Restricted eigenvalue conditions on subgaussian random matrices.
\newblock arxiv e-prints, arXive.

\end{thebibliography}

\end{document}